\documentclass[conference]{IEEEtran}
\IEEEoverridecommandlockouts
\usepackage{cite}
\usepackage{amsmath,amssymb,amsfonts}
\usepackage{graphicx}
\usepackage{textcomp}
\usepackage{xcolor}
\usepackage{graphicx}
\usepackage{amsmath}
\usepackage[noend]{algpseudocode}
\usepackage{algorithmicx,algorithm}
\usepackage{amsthm}
\usepackage{amsfonts}
\usepackage{subfigure} 
\usepackage{color}
\usepackage{cite}
\usepackage{setspace}
\newtheorem{Lemma}{Lemma}
\newtheorem{Theorem}{Theorem}
\usepackage{amssymb}
\usepackage{color}
\graphicspath{{images/}}
\usepackage{booktabs}
\usepackage{multirow} 
\usepackage{makecell}
\usepackage{threeparttable}
\usepackage[numbers,sort&compress]{natbib}
\newtheorem{remark}{Remark}
\usepackage{stfloats}
\usepackage{amsmath}
\usepackage{amssymb}
\usepackage{enumitem}
\usepackage{comment}
\usepackage{bm}
\setlist{noitemsep, topsep=2pt, parsep=0pt, partopsep=0pt}
\def\BibTeX{{\rm B\kern-.05em{\sc i\kern-.025em b}\kern-.08em
    T\kern-.1667em\lower.7ex\hbox{E}\kern-.125emX}}
\begin{document}

\title{\huge Sea-State-Induced Performance Transition in Maritime Networks: A Roughness-Aware Stochastic Geometry Framework}

\author{Wen-Yu Dong, Shaoshi Yang,~\IEEEmembership{Senior Member,~IEEE}, Song Zhao, Rui-Si Han, \\ Qi Bi,~\IEEEmembership{Fellow,~IEEE}, and Sheng Chen,~\IEEEmembership{Life Fellow,~IEEE}	%

\thanks{W.-Y. Dong, S. Zhao and Q. Bi are with the Future Technology Research Center, China Telecom Research Institute, Beijing 102209, China (Email: dongwy@chinatelecom.cn; zhaosong1@chinatelecom.cn; qibi@chinatelecom.cn)}
\thanks{S. Yang is with the School of Information and Communication Engineering, Beijing University of Posts and Telecommunications, and also with the Key Laboratory of Mathematics and Information Networks, Ministry of Education, Beijing 100876, China (E-mail:  shaoshi.yang@bupt.edu.cn).}
\thanks{R.-S. Han is with the Cloud Network Operating System R\&D Center, China Telecom, Beijing 102209, China (Email: hanruisi@chinatelecom.cn)}
\thanks{S. Chen is with the School of Electronics and Computer Science, University of Southampton, Southampton SO17 1BJ, U.K. (E-mail: sqc@ecs.soton.ac.uk).} %
\vspace*{-3mm}

\thanks{\scriptsize This manuscript presents an early version of our study on the impact of sea surface roughness on near-shore maritime networks. A substantially extended version has been published in \emph{IEEE Transactions on Communications} under the title ``Proper Sea Surface Roughness Enhances the Performance of Near-Shore Maritime Networks,'' with additional theoretical analysis, refined propagation modeling, and more comprehensive performance evaluations.}}
\maketitle

\begin{abstract}
Analytical studies of maritime wireless networks commonly assume a deterministic smooth sea surface, leaving unclear how realistic ocean conditions reshape network-level reliability. Unlike conventional intuition that sea roughness always deteriorates propagation, this work reveals a non-monotonic sea-state-induced performance transition caused by the competition between interference-null mitigation and coherent reflection loss. This paper develops a physically grounded, sea-state-aware stochastic geometry framework for maritime networks by incorporating sea surface roughness into propagation modeling. Specifically, we derive an effective reflection coefficient based on the classical Rayleigh roughness criterion, where the significant wave height explicitly characterizes the attenuation of the coherent specular reflection component caused by surface roughness. By integrating the proposed channel model into a stochastic geometry framework, we derive tractable expressions for uplink coverage probability under different sea states. Our analysis reveals a non-monotonic impact of sea roughness on network performance under the considered propagation model: moderate roughness can improve reliability-oriented coverage by mitigating destructive interference nulls, whereas stronger roughness attenuates coherent reflected energy and degrades high-SINR performance. Measurement comparisons support the underlying roughness-sensitive reflection mechanism, while rough-sea VHF results are interpreted as wavelength-specific model predictions rather than direct empirical validation.
\end{abstract}

\begin{IEEEkeywords}
Maritime communication, physical channel model, sea surface roughness, stochastic geometry, coverage probability.
\end{IEEEkeywords}

\section{Introduction}\label{S1}

The performance of these vital systems is, however, uniquely constrained by the propagation channel over the sea, where electromagnetic propagation is strongly affected by surface reflection and ocean conditions \cite{Alqurashi2023, Lyu2021}. Unlike terrestrial environments with relatively stable propagation characteristics, maritime channels exhibit distinct propagation behaviors due to the interaction between electromagnetic waves and sea surfaces. Therefore, developing physically accurate channel models that capture the impact of sea conditions on network-level performance is a fundamental requirement for robust maritime network analysis.

However, a significant gap exists between the physical reality of the maritime channel and the models predominantly used in theoretical network analysis. The literature on this topic has evolved along two primary, yet largely disconnected, trajectories. The first focuses on high-fidelity physical channel modeling, employing computationally intensive methods like stochastic ray tracing \cite{Ding2019} or geometric stochastic channel models \cite{Raulefs2023} to capture detailed link-level physics. While these models offer exceptional accuracy for single links, their inherent complexity makes them intractable for the performance analysis of large-scale networks with randomly located users.

The second trajectory leverages tractable network-level analysis, with stochastic geometry emerging as a powerful tool for analyzing interference-limited wireless systems \cite{DongTCOM, DongJSAC, DongIoTJ}. Recently, fluid-spatiotemporal stochastic geometry (F-STSG) has further extended stochastic geometric modeling toward non-stationary wireless environments by characterizing the evolution of information flows over spatial-temporal fields \cite{Dong2026FSTSG}. However, to maintain tractability, its application to maritime or integrated networks has resorted to overly simplistic models that lack physical fidelity. For instance, recent shore-to-ship analyses either adopt a standard power-law path loss \cite{Hu2024}, or employ empirical ITU models derived from curve-fitting \cite{Xu2023}. These simplifications entirely neglect the dominant multipath reflection and atmospheric ducting phenomena that define the maritime channel. Even the seminal, measurement-validated piecewise model by Lee et al. \cite{Lee2014}, a cornerstone in the field, is predicated on an idealized, perfectly smooth sea surface and was validated only under calm conditions, failing to physically parameterize the impact of different sea states on reflection characteristics. This analytical gap is compounded by equally simplistic spatial models, which often assume a single representative link or a uniform vessel distribution \cite{Xu2023, Hu2024}, failing to capture the non-uniform, arbitrary patterns of real-world maritime traffic. Our previous work introduced the coverage probability density concept to characterize spatially resolved performance variations in non-homogeneous maritime networks, highlighting the importance of realistic vessel distributions in network-level analysis \cite{Dong2026CPDF}. Consequently, a significant disconnect persists between the physical realities of the near-shore environment and the theoretical frameworks used for system-level performance analysis.

More importantly, existing maritime network analyses usually treat sea surface conditions as fixed assumptions and rarely quantify how different sea states affect stochastic network performance. This creates a gap between physical channel realism and analytical network modeling. Bridging this gap requires a framework that jointly incorporates realistic sea-surface propagation mechanisms and tractable stochastic geometry analysis. This perspective is also consistent with our recent efforts on fluid-based modeling of non-stationary wireless systems, where environmental and traffic variations are incorporated into network-level analysis for infrastructure provisioning and resource adaptation \cite{DongDigitalTides2026}.

This paper bridges this critical gap by making the following novel contributions:
\begin{itemize}
\item We establish a sea-state-dependent propagation framework that explicitly connects environmental conditions to network-level performance. By deriving an effective reflection coefficient from the Rayleigh roughness criterion, we reveal how significant wave height, carrier wavelength, and grazing angle jointly determine the transition from coherent multipath interference to rough-surface scattering.
\item We integrate the proposed physical channel model into a tractable stochastic geometry framework and derive analytical expressions for uplink coverage probability under roughness-aware maritime propagation conditions.
\item Our framework reveals a reliability--rate trade-off induced by sea
surface roughness. Specifically, moderate roughness can improve
reliability-oriented coverage by mitigating destructive interference
nulls, whereas stronger roughness attenuates the coherent reflected
component and degrades high-SINR transmission performance.
\end{itemize}

\section{System Model}\label{S2}

\subsection{Spatial and Network Model}\label{S2.1}

We consider a near-shore network with a single onshore station (OS) at the origin and vessels distributed on a 2D half-plane. To capture realistic traffic patterns, their locations are modeled by a non-homogeneous Poisson point process (NHPPP), $\Phi_S$, with an isotropic intensity function 
\begin{align}\label{eqNHPPP} 
  \mu(d) = \mu_0 d^{\alpha-1}e^{-\beta d},
\end{align}
where $\mu_0$ is the intensity constant.
This specific function is chosen because it effectively models realistic maritime traffic patterns where vessel density may peak in shipping lanes away from the coast. Specifically, the combination of a power-law term, i.e., $d^{\alpha-1}$, and an exponential decay term, i.e., $e^{-\beta d}$, allows the vessel density to be very low near the coast, peak at a certain distance offshore corresponding to a main shipping lane, and then diminish further out at sea. The parameters $\alpha > 1$ and $\beta$ provide the flexibility to control the location and concentration of this peak density. The term ``isotropic'' signifies that the density depends only on the distance $d$ from the OS, not the angle, simplifying the model to represent shipping lanes as a high-density annulus.

For uplink medium access, we adopt an ALOHA-like protocol where each vessel transmits independently with a probability $p_a$. Consequently, the set of concurrently transmitting interferers, $\Phi'_S$, also forms a thinned NHPPP with intensity 
\begin{align}\label{eqSNHPPP} 
  \mu'(d) = p_a \mu(d).
\end{align}

\subsection{Channel and SINR Model}\label{S2.2}

The small-scale fading power gain, $|h|^2$, for any given link is assumed to follow a Nakagami-$m$ distribution. This model is chosen for its tractability and flexibility in accurately representing the line-of-sight (LOS) dominant characteristics of the maritime channel.

\begin{figure}[!t]
\begin{center}
	\includegraphics[width=\columnwidth]{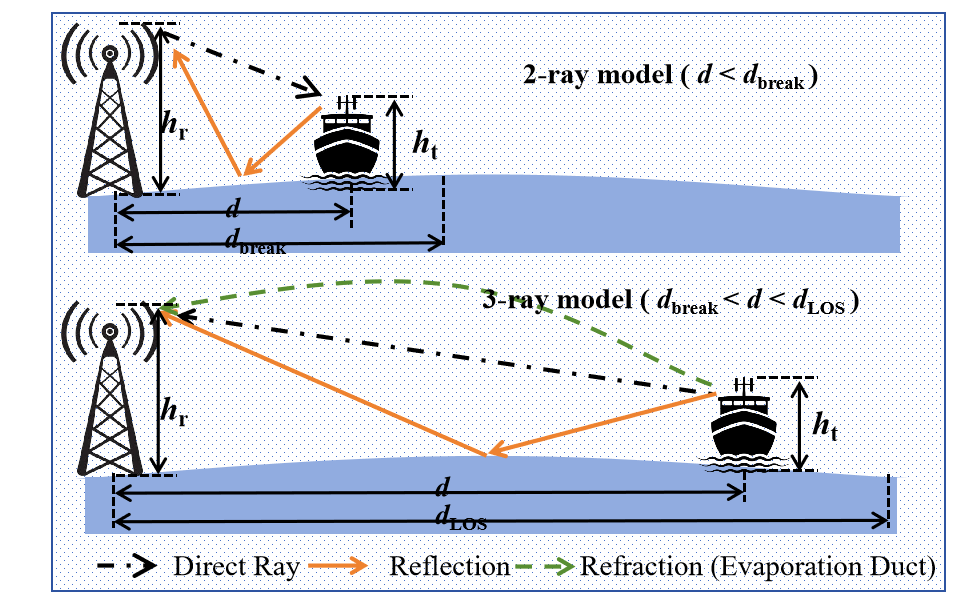}
\end{center}
\vspace*{-4mm}
\caption{Illustration of the piecewise path loss framework.}
\label{fig:system_model_illustration} 
\vspace*{-3mm}
\end{figure}

The performance of a typical vessel at a distance $D$ from the OS is determined by the uplink signal-to-interference-plus-noise ratio (SINR), defined as:
\begin{equation}\label{eqSINR} 
	\mathrm{SINR} = \frac{P_{S_m}|h|^2 G(D)}{I + N_0},
\end{equation}
where $P_{S_m}$ is the transmit power of the typical vessel, $G(D) = 1/PL(D)$ is the large-scale path gain, $I = \sum_{S_i \in \Phi'_S} P_{S_i}|h_i|^2 G(D_i)$ is the aggregate interference from other active vessels, with $P_{S_i}$ being the transmit power of an interfering vessel, and $N_0$ is the additive white Gaussian noise power.

\subsection{Baseline Large-Scale Path Loss Model}\label{S2.3}

For the large-scale path gain, we adopt the widely-used piecewise path loss framework \cite{Lee2014} as our baseline, which is predicated on an ideal, perfectly smooth sea surface ($\Gamma \approx -1$). As illustrated in Fig.~\ref{fig:system_model_illustration}, this model combines two propagation regimes based on the distance $d$ relative to a breakpoint $d_{\mathrm{break}}\! =\! 4h_{\textrm t} h_{\textrm r} / \lambda$, where $h_{\textrm t}$ and $h_{\textrm r}$ are the transmitting and receiving antenna heights, respectively, while $\lambda$ is the carrier wavelength.

\subsubsection{Two-Ray Model ($d < d_{\mathrm{break}}$)}
At shorter distances, propagation is accurately described by the coherent sum of the direct and sea-reflected paths, as shown in the top panel of Fig.~\ref{fig:system_model_illustration}. The path loss is given by:
\begin{equation}\label{eqTwoRM} 
	PL_{\mathrm{2-ray}}^{\mathrm{ideal}}(d) = \frac{(4\pi d/\lambda)^2}{\big(2\sin\big(\frac{2\pi h_{\textrm t} h_{\textrm r}}{\lambda d}\big)\big)^2}.
\end{equation}

\subsubsection{Three-Ray Model ($d \geq d_{\mathrm{break}}$)}
Beyond the breakpoint, a third path resulting from reflection and refraction within the atmospheric evaporation duct becomes significant, as depicted in the bottom panel of Fig.~\ref{fig:system_model_illustration}. The path loss is modeled as:
\begin{equation}\label{eqThreeRM} 
	PL_{\mathrm{3-ray}}^{\mathrm{ideal}}(d) = \frac{(4\pi d/\lambda)^2}{(2(1+\Delta(d)))^2},
\end{equation}
where the interference term $\Delta(d)$ is given by:
\begin{equation}\label{eqInfTerm} 
	\Delta(d) = 2\sin\left(\frac{2\pi h_{\textrm t} h_{\textrm r}}{\lambda d}\right)\sin\left(\frac{2\pi(h_{\textrm e}-h_{\textrm t})(h_{\textrm e}-h_{\textrm r})}{\lambda d}\right),
\end{equation}
with $h_{\textrm e}$ being the effective height of the evaporation duct.

\section{The Roughness-Aware Path Loss Model}\label{S3}

\subsection{Derivation of the Effective Reflection Coefficient}\label{S3.1}

We quantify sea state using the significant wave height ($H_s$). The electromagnetic impact of roughness is governed by the Rayleigh roughness factor $\xi$ \cite{Pinel2010}:
\begin{equation}\label{eq:rayleigh_factor} 
	\xi = \frac{4\pi H_{\text{s}} \sin\psi}{\lambda},
\end{equation}
where $\psi\! \approx\! (h_{\text{t}} + h_{\text{r}}) / d$ is the grazing angle. This random phase variation attenuates the coherent specular component by a factor $\rho_{\text{spec}}$, which for a Gaussian surface height distribution is given by:
\begin{equation}\label{eqGSH} 
	\rho_{\text{spec}} = \exp\left(-\frac{\xi^2}{2}\right).
\end{equation}
The effective reflection coefficient $\rho_s$ is the product of the ideal coefficient $\Gamma$ and this scattering factor:
\begin{equation}\label{eqERC} 
	\rho_{\text{s}} = \Gamma \cdot \rho_{\text{spec}} \approx -1 \cdot \exp\left(-\frac{\xi^2}{2}\right).
\end{equation}
For a smooth sea ($H_s \to 0$), $\rho_s \to -1$, which recovers the ideal model. For rough seas ($H_s > 0$), $|\rho_s| < 1$, which correctly models the reduction in specular power.

\subsection{Modified Path Loss Equations}\label{S3.2}

\begin{Theorem}\label{T1}
For a sea surface with a non-ideal reflection coefficient $\rho_{\mathrm{s}}$, the linear path loss, $P\!L^{\rm{rough}}(d)$, is given by a piecewise function.
\begin{itemize}
\item For the two-ray region ($d < d_{\mathrm{break}}$):
	\begin{equation} \label{eq:rough_2ray} 
		P\!L_{\rm{2-ray}}^{\rm{rough}}(d) = \frac{(4\pi d / \lambda)^2}{1 + \rho_{\mathrm{s}}^2 - 2|\rho_{\mathrm{s}}|\cos\left(\frac{4\pi h_{\mathrm{t}} h_{\mathrm{r}}}{\lambda d}\right)}.
	\end{equation}
\item For the three-ray region ($d \geq d_{\mathrm{break}}$):
	\begin{equation}\label{eq:rough_3ray} 
		P\!L_{\rm{3-ray}}^{\mathrm{rough}}(d)\! =\! \frac{(4\pi d / \lambda)^2}{(1\! +\! \rho_{\mathrm{s}} C_2\! -\! C_3)^2 + (\rho_{\mathrm{s}} S_2\! -\! S_3)^2},
	\end{equation}
\end{itemize}
where $C_i=\cos(k\Delta d_i)$ and $S_i=\sin(k\Delta d_i)$, $i=2,3$, with $\Delta d_2$ and $\Delta d_3$ being the respective path length differences.
\end{Theorem}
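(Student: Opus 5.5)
The approach is to build the received field explicitly as a phasor sum of the individual rays, each normalised to the free-space direct path, and then read off the path gain as the squared modulus of the resulting array factor. Concretely, I write the complex baseband field at the receiver as
\begin{equation*}
E \propto \frac{\lambda}{4\pi d}\Big(1 + \rho_{\mathrm{s}} e^{-jk\Delta d_2} + a_3 e^{-jk\Delta d_3}\Big),
\end{equation*}
where $k=2\pi/\lambda$, $\Delta d_2$ is the excess length of the sea-reflected path and $\Delta d_3$ that of the duct path. Throughout, I use the standard far-field approximation that all path lengths are equal to $d$ in the amplitude terms and differ only in phase. Then $G(d)=|E|^2/|E_{\text{fs}}|^2\cdot(\lambda/4\pi d)^2$, so $PL^{\rm rough}(d)=(4\pi d/\lambda)^2/|A|^2$, where $A$ is the bracketed factor.

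\emph{Two-ray region.} I drop the duct term ($a_3=0$), so that $|A|^2=|1+\rho_{\mathrm{s}}e^{-jk\Delta d_2}|^2$. Expanding this gives
\begin{equation*}
|A|^2 = 1+\rho_{\mathrm{s}}^2+2\rho_{\mathrm{s}}\cos(k\Delta d_2).
\end{equation*}
Since $\rho_{\mathrm{s}}<0$ by the definition of the effective reflection coefficient in Section~\ref{S3.1}, we have $2\rho_{\mathrm{s}}=-2|\rho_{\mathrm{s}}|$. Next I use the flat-earth geometry $\Delta d_2=\sqrt{d^2+(h_{\mathrm t}+h_{\mathrm r})^2}-\sqrt{d^2+(h_{\mathrm t}-h_{\mathrm r})^2}\approx 2h_{\mathrm t}h_{\mathrm r}/d$, which turns the phase into $4\pi h_{\mathrm t}h_{\mathrm r}/(\lambda d)$ and yields the two-ray formula of Theorem~\ref{T1}. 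As a consistency check, setting $\rho_{\mathrm{s}}=-1$ gives $2-2\cos x=4\sin^2(x/2)$, which recovers the ideal two-ray path loss of Section~\ref{S2.3}.

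\emph{Three-ray region.} I keep all three terms and take the duct ray to be reflected with coefficient $a_3=-1$. This is the convention implicit in Lee et al., since the duct reflection is not a sea-surface reflection and is therefore not attenuated by the roughness factor. Splitting $A$ into real and imaginary parts with $e^{-jk\Delta d_i}=C_i-jS_i$ gives
\begin{equation*}
\operatorname{Re}A=1+\rho_{\mathrm{s}}C_2-C_3,\qquad \operatorname{Im}A=-(\rho_{\mathrm{s}}S_2-S_3).
\end{equation*}
Hence $|A|^2=(1+\rho_{\mathrm{s}}C_2-C_3)^2+(\rho_{\mathrm{s}}S_2-S_3)^2$, which is exactly the three-ray expression in Theorem~\ref{T1}.

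The main obstacle is not the algebra but justifying the modelling conventions so that the result is consistent with the baseline three-ray model. These conventions are the sign and magnitude assigned to the duct ray, the choice of $\Delta d_3$ from the duct geometry with effective height $h_{\mathrm e}$, and the assumption that only the sea-reflected ray carries $\rho_{\mathrm{s}}$. To settle this, I would verify that with $\rho_{\mathrm{s}}=-1$ and small-angle expansions of $\Delta d_2,\Delta d_3$ the modulus reduces to the form $(2(1+\Delta(d)))^2$ of the ideal three-ray model, up to the approximations used by Lee et al. If this check fails, I would adjust the phase reference, for example by measuring phases relative to the midpoint of the direct and reflected rays. With that check in place, the theorem follows as a direct generalisation that replaces $\Gamma=-1$ by $\rho_{\mathrm{s}}$ on the sea-reflected ray only.
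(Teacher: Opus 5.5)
Your derivation of both formulas is correct and follows the paper's own proof essentially step for step. Both use a coherent phasor sum of the rays with amplitudes approximated by $1/d$ and Taylor-expanded path differences in the phases. Both use $\rho_{\mathrm s}<0$ to write $2\rho_{\mathrm s}=-2|\rho_{\mathrm s}|$, assign the duct ray a coefficient of $-1$, and split the three-ray modulus into real and imaginary parts.

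One correction concerns the step you call the main obstacle. The three-ray consistency check you plan would fail, and your fallback cannot rescue it. With $\rho_{\mathrm s}=-1$ and $x_i=k\Delta d_i$ you get $|A|^2=3-2\cos x_2-2\cos x_3+2\cos(x_2-x_3)$. This tends to $1$ as $d\to\infty$, whereas the baseline factor $(2(1+\Delta(d)))^2$ in (\ref{eqThreeRM}) tends to $4$. More generally, when all phases vanish, three unit phasors give $|1\pm1\pm1|^2\in\{1,9\}$, so no sign convention for the duct ray helps either. Changing the phase reference only multiplies $A$ by a unimodular factor and leaves $|A|^2$ unchanged, so it cannot close a mismatch in modulus.

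The upshot is that (\ref{eq:rough_3ray}) should not be made contingent on that check. It follows from the modelling conventions alone: a duct ray with coefficient $-1$, and $\rho_{\mathrm s}$ applied to the sea-reflected ray only. That is exactly how the paper argues, and the paper never attempts a reduction to (\ref{eqThreeRM}). Your remark that this convention is ``implicit in Lee et al.'' is therefore not accurate. At $\rho_{\mathrm s}=-1$ the formula recovers the ideal model only in the two-ray region, which your own check there confirms.
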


\begin{proof}
The proof is derived from the coherent superposition of the electric field components for the two-ray and three-ray models, respectively.

\textit{1) Two-Ray Model ($d < d_{\mathrm{break}}$):}
The total received electric field, $E_{\text{total}}$, is the vector sum of the LOS field, $E_{\text{LOS}}$, and the sea-reflected field, $E_{\text{refl}}$:
\begin{equation}\label{eqTwoMef} 
	E_{\text{total}} = E_{\text{LOS}} + E_{\text{refl}} = \frac{E_0}{d_1} e^{-jkd_1} + \rho_{\text{s}} \frac{E_0}{d_2} e^{-jkd_2},
\end{equation}
where $E_0$ is the initial field strength, $d_1$ and $d_2$ are the path lengths of the direct and reflected rays, respectively, and $k=2\pi/\lambda$ is the wavenumber.
For far-field scenarios ($d \gg h_{\text{t}}, h_{\text{r}}$), we use the standard approximations: $d_1 \approx d_2 \approx d$ for the amplitude, and a Taylor expansion for the path length difference, $\Delta d = d_2 - d_1 \approx \frac{2 h_{\text{t}} h_{\text{r}}}{d}$, for the phase. The total field can then be written as:
\begin{equation}\label{eqTwoMef1} 
	E_{\text{total}} \approx \frac{E_0}{d} e^{-jkd_1} \left( 1 + \rho_{\text{s}} e^{-j k\Delta d} \right).
\end{equation}
The received power is proportional to the squared magnitude of this field. Since $|e^{-jkd_1}|\! =\! 1$ and $\rho_{\text{s}}$ is a negative real number, the multipath power gain factor is given by:
\begin{align}\label{eqMPG2} 
	G_{\text{mp}}^{\text{2-ray}} =& 1 + |\rho_{\text{s}}|^2 - 2|\rho_{\text{s}}|\cos(k\Delta d) \nonumber \\
	=& 1 + \rho_{\text{s}}^2 - 2|\rho_{\text{s}}|\cos(k\Delta d).
\end{align}
The path loss is the free-space path loss, $P\!L_{\text{fs}} = (4\pi d / \lambda)^2$, divided by this gain factor, which yields (\ref{eq:rough_2ray}).
	
\textit{2) Three-Ray Model ($d \geq d_{\mathrm{break}}$):}
This model adds a third path component, $E_{\text{duct}}$, resulting from refraction by the evaporation duct, which is treated as a reflection from an effective height $h_{\textrm e}$ with a reflection coefficient of $-1$. Hence,
\begin{equation}\label{eqThrMef1} 
	E_{\text{total}} \approx \frac{E_0}{d} \left( e^{-jkd_1} + \rho_{\text{s}} e^{-jkd_2} - e^{-jkd_3} \right) ,
\end{equation}
where the path length difference for the duct-refracted ray is $\Delta d_3\! =\! d_3\! -\! d_1\! \approx\! \frac{2(h_{\mathrm{e}} - h_{\mathrm{t}})(h_{\mathrm{e}} - h_{\mathrm{r}})}{d}$. Thus the multipath gain is:
\begin{align}\label{eqMPG3} 
	G_{\text{mp}}^{\text{3-ray}} &= (1 + \rho_{\text{s}} C_2 - C_3)^2 + (\rho_{\text{s}} S_2 - S_3)^2,
\end{align}
where $C_i=\cos(k\Delta d_i)$ and $S_i=\sin(k\Delta d_i)$, $i=2,3$. This completes the proof.
\end{proof}

\begin{remark}\label{remark:limits}
	The proposed roughness-aware model naturally unifies two extreme propagation regimes. 
	When $H_s\rightarrow0$, the Rayleigh factor satisfies $\rho_{\rm spec}\rightarrow1$, and hence $\rho_s\rightarrow-1$, recovering the conventional smooth-sea two-ray/three-ray model. 
	Conversely, when $H_s\rightarrow\infty$, $\rho_s\rightarrow0$, indicating the disappearance of coherent sea reflection and reducing the model toward a dominant direct-path propagation regime.
	Therefore, the proposed model provides a continuous transition between smooth-sea multipath propagation and rough-sea scattering propagation.
\end{remark}

\section{Sea-State-Aware Uplink Coverage Analysis}
\label{S4}

Unlike conventional stochastic geometry analyses that assume a fixed propagation
environment, the proposed framework incorporates the environmental variable
$H_s$ into both desired signal and interference links through the roughness-aware
path loss model developed in Section~III. Therefore, the uplink coverage
probability becomes a function of both the SINR threshold and sea state,
denoted by $P_{\mathrm{cov}}^{\mathrm{UL}}(T,H_s)$.

\subsection{Interference Characterization under Roughness-Aware Propagation}
\label{S4.1}

\begin{Lemma}\label{L1}
	Under the proposed sea-state-dependent propagation model, the Laplace transform
	of the aggregate interference can be decomposed into the product of the
	near-field and far-field interference components:	
	\begin{equation}
		\mathcal{L}_{I}(s,H_s)
		=
		\mathcal{L}_{I_{\rm near}}(s,H_s)
		\mathcal{L}_{I_{\rm far}}(s_1,H_s).
	\end{equation}
	
	The corresponding transforms are given by	
	\begin{align}
		\mathcal{L}_{I,\mathrm{near}}(s,H_s)
		=& \!
		\exp\Bigg(\!\!
		\pi \!\!
		\int_0^{d_{\mathrm{break}}}  \!\!\!
		\left[ \!\!
		\left(
		1  \!+  \!
		\frac{sP_{S_i}}
		{mP\!L_{\rm 2-ray}^{\rm rough}(d_i,H_s)}
		\right)^{\!\!\!-m}  \right. \nonumber\\
		&
		-1
		\bigg]
		\times
		p_a\mu_0 d_i^{\alpha}
		e^{-\beta d_i}
		\,\mathrm{d}d_i
		\Bigg),
	\end{align}

	\begin{align}
		\mathcal{L}_{I,\mathrm{far}}(s_1,H_s)
		=&\!
		\exp\Bigg(\!\!
		\pi\!\!
		\int_{d_{\mathrm{break}}}^{d_{\mathrm{LOS}}}\!\!\!
		\left[\!\!
		\left(\!
		1 \!+\!
		\frac{s_1P_{S_i}}
		{mPL_{\rm 3-ray}^{\rm rough}(d_i,H_s)}
		\right)^{\!\!\!-m} \right. \nonumber\\
		&
		-1
		\bigg]
		\times
		p_a\mu_0 d_i^{\alpha}
		e^{-\beta d_i}
		\,\mathrm{d}d_i
		\Bigg),
	\end{align}
		where $p_a$ is the medium access probability,
	$s=-n\eta TPL_{\rm 2-ray}^{\rm rough}(d,H_s)/P_{S_m}$,
	and
	$s_1=-n\eta TPL_{\rm 3-ray}^{\rm rough}(d,H_s)/P_{S_m}$.
\end{Lemma}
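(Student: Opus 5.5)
The plan is to apply the probability generating functional (PGFL) of the thinned NHPPP $\Phi'_S$ after splitting the interference according to the piecewise path loss of Theorem~\ref{T1}. Partition the half-plane into the disk region $\{d_i<d_{\mathrm{break}}\}$ and the annulus $\{d_{\mathrm{break}}\le d_i\le d_{\mathrm{LOS}}\}$; beyond $d_{\mathrm{LOS}}$ we neglect contributions, as the model implicitly does. By the independence property of Poisson processes on disjoint sets, $I=I_{\rm near}+I_{\rm far}$ with $I_{\rm near}$ and $I_{\rm far}$ independent. Hence $\mathbb{E}[e^{-sI}]$ factors into a product of the two Laplace transforms. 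The fact that the statement evaluates the far part at $s_1$ rather than $s$ reflects how the lemma is later used. The argument of the transform is dictated by whether the typical vessel at distance $D$ is in the two-ray or three-ray regime (through $PL^{\rm rough}(D,H_s)$ in the Nakagami-$m$ coverage step), and the factor $n\eta T$ comes from the Alzer-type bound on the Gamma CDF. So I would state the factorization for generic arguments and then substitute the specific $s$ and $s_1$.

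For each factor, condition on $\Phi'_S$ and average first over the i.i.d.\ Nakagami-$m$ gains $|h_i|^2\sim\Gamma(m,1/m)$. Their MGF gives
\begin{equation*}
\mathbb{E}\big[e^{-s P_{S_i}|h_i|^2 G(d_i)}\big]=\Big(1+\frac{sP_{S_i}}{m\,PL^{\rm rough}(d_i,H_s)}\Big)^{-m}.
\end{equation*}
Here $PL^{\rm rough}$ is the two-ray or three-ray expression of Theorem~\ref{T1}, with $H_s$ entering through $\rho_s$. Then apply the PGFL:
\begin{equation*}
\mathbb{E}\prod_i f(x_i)=\exp\Big(-\int (1-f(x))\mu'(x)\,\mathrm{d}x\Big).
\end{equation*}
Convert to polar coordinates over the half-plane, which gives angular measure $\pi$ and Jacobian $d_i$. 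Combining with $\mu'(d)=p_a\mu_0 d^{\alpha-1}e^{-\beta d}$ yields the integrand weight $p_a\mu_0 d_i^{\alpha}e^{-\beta d_i}$ and the stated exponents, with radial limits $[0,d_{\mathrm{break}}]$ and $[d_{\mathrm{break}},d_{\mathrm{LOS}}]$ respectively.

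The main obstacle is bookkeeping rather than analysis. The integrands are bounded and measurable, so the PGFL applies. What needs checking is the convergence of the near-field integral at $d_i\to0$ and at the nulls of the two-ray gain where $PL^{\rm rough}$ blows up (for $H_s=0$). There the integrand tends to $0$ and is harmless, since $(1+x)^{-m}-1\in(-1,0]$ for $x\ge 0$. One more point needs care: $s$ as written is negative (the sign convention in the statement). I would interpret it as the magnitude $n\eta T PL/P_{S_m}$ entering the Laplace transform, which keeps $1+sP_{S_i}/(mPL)$ positive, and I would note this sign convention explicitly.
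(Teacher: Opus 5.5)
Your proposal is correct and follows the paper's proof essentially step by step. It uses Gamma-MGF averaging over the Nakagami-$m$ gains, then the PGFL of the thinned NHPPP, then the half-plane polar substitution $\mathrm{d}\mathbf{x}=\pi d_i\,\mathrm{d}d_i$ that turns $\mu'(d_i)$ into the weight $p_a\mu_0 d_i^{\alpha}e^{-\beta d_i}$, and you are right to read the negative sign in $s$ and $s_1$ as a sign slip. Your disjoint-region independence step supplies the product form that the paper only asserts, but it yields $\mathcal{L}_{I_{\mathrm{near}}}(s,H_s)\,\mathcal{L}_{I_{\mathrm{far}}}(s,H_s)$ at a common argument, so the mixed $(s,s_1)$ product in the statement is really the joint transform $\mathbb{E}\big[e^{-sI_{\mathrm{near}}-s_1I_{\mathrm{far}}}\big]$, not $\mathcal{L}_I$ evaluated at a single point.
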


\begin{proof}
	
	Following the definition of the interference Laplace transform,	
	\begin{align*}
		\mathcal{L}_{I,\mathrm{near}}(s,H_s)
		=&
		\mathbb{E}
		\left[
		\exp
		\left(
		-s
		\sum_{S_i\in\Phi'_S}
		\frac{P_{S_i}|h_i|^2}
		{PL_{\rm 2-ray}^{\rm rough}(D_i,H_s)}
		\right)
		\right].
	\end{align*}

	By applying the moment generating function of the normalized Gamma	random variable,
	
	\begin{align*}
		=&
		\mathbb{E}_{\Phi'_S}
		\left[
		\prod_{S_i\in\Phi'_S}
		\left(
		1+
		\frac{sP_{S_i}}
		{mPL_{\rm 2-ray}^{\rm rough}(d_i,H_s)}
		\right)^{-m}
		\right].
	\end{align*}

	Using the probability generating functional (PGFL) of the NHPPP,
		\begin{align}
		=&
		\exp
		\left(
		\int_S
		\left[
		\left(
		1+
		\frac{sP_{S_i}}
		{mPL_{\rm 2-ray}^{\rm rough}(d_i,H_s)}
		\right)^{-m}
		\!\!\!\!\!\!-1
		\right]
		\mu'(d_i)
		\mathrm{d}{\bf x}
		\right).
	\end{align}

	By exploiting the radial symmetry of the spatial intensity function,	
	\[
	\mathrm{d}{\bf x}
	=
	\pi d_i\mathrm{d}d_i ,
	\]
	which yields the near-field expression.
	
	The far-field interference transform follows the same procedure by replacing
	the two-ray propagation term with the roughness-aware three-ray propagation
	term.
	
\end{proof}

\subsection{Coverage Probability under Sea-State Variation}
\label{S4.2}

\begin{Theorem}\label{The2}
	
	For a maritime network operating under sea state $H_s$, the uplink coverage
	probability is expressed as
	
	\begin{align}
		P_{\mathrm{cov}}^{\mathrm{UL}}(T,H_s)
		=&
		\int_0^{d_{\rm break}}
		\sum_{n=1}^{m}
		(-1)^{n+1}
		\binom{m}{n}
		\exp(-sN_0)
		\nonumber\\
		&\times
		\mathcal{L}_{I,\mathrm{near}}(s,H_s)
		g(d)\mathrm{d}d
		\nonumber\\
		&
		+
		\int_{d_{\rm break}}^{d_{\rm LOS}}
		\sum_{n=1}^{m}
		(-1)^{n+1}
		\binom{m}{n}
		\exp(-s_1N_0)
		\nonumber\\
		&\times
		\mathcal{L}_{I,\mathrm{far}}(s_1,H_s)
		g(d)\mathrm{d}d ,
	\end{align}
	where
	$
	s=
	-\frac{n\eta T
		PL_{\rm 2-ray}^{\rm rough}(d,H_s)}
	{P_{S_m}},
$
	and
	$
	s_1=
	-\frac{n\eta T
		PL_{\rm 3-ray}^{\rm rough}(d,H_s)}
	{P_{S_m}}.
	$
	\end{Theorem}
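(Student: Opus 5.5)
The plan is to condition on the distance $D=d$ of the typical vessel and average over the distance distribution $g(d)$. The support of $g$ is split at $d_{\rm break}$ so that the appropriate branch of Theorem~\ref{T1} is used for the desired link. Conditioned on $D=d$, coverage is the event
\[
|h|^2 > \frac{T\,P\!L^{\rm rough}(d,H_s)\,(I+N_0)}{P_{S_m}}.
\]
The total probability then reads
\[
P_{\rm cov}^{\rm UL}(T,H_s)=\int_0^{d_{\rm break}}\Pr(\cdot\mid d)\,g(d)\,\mathrm{d}d+\int_{d_{\rm break}}^{d_{\rm LOS}}\Pr(\cdot\mid d)\,g(d)\,\mathrm{d}d,
\]
with $P\!L^{\rm rough}_{\rm 2-ray}$ in the first integral and $P\!L^{\rm rough}_{\rm 3-ray}$ in the second. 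Here $g(d)$ is the normalized distance density of the typical vessel induced by $\mu(d)$ over the LOS region.

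The key step is the conditional CCDF of the Nakagami-$m$ gain. Since $|h|^2$ is normalized Gamma with integer shape $m$, I would not use the exact incomplete-Gamma form. Instead I would apply Alzer's tight bound, $\Pr(|h|^2>x)\approx 1-(1-e^{-\eta x})^m$ with $\eta=m(m!)^{-1/m}$. Expanding binomially gives
\[
\sum_{n=1}^m(-1)^{n+1}\binom{m}{n}e^{-n\eta x}.
\]
I then substitute $x=T P\!L^{\rm rough}(d,H_s)(I+N_0)/P_{S_m}$ and take the expectation over $I$. Because noise and interference separate in the exponent, each term factorizes as $\exp(-n\eta T P\!L N_0/P_{S_m})\,\mathbb{E}[\exp(-n\eta T P\!L I/P_{S_m})]$. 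This is exactly $\exp(-sN_0)\mathcal{L}_I(s,H_s)$, with $s$ as in the statement up to its sign convention; the paper writes $s$ with a minus sign, which I would read as absorbed so that the exponent is negative.

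For the interference factor I would invoke Lemma~\ref{L1}. For $d<d_{\rm break}$ the argument is $s$ and the relevant transform is $\mathcal{L}_{I,\rm near}$; for $d\ge d_{\rm break}$ the argument is $s_1$ and the transform is $\mathcal{L}_{I,\rm far}$. Sea state $H_s$ enters both the desired-link $s$, $s_1$ and the interferer path losses through $\rho_s$, since Theorem~\ref{T1} applies to every link. Plugging in and integrating against $g(d)$ gives the two-integral expression.

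The main obstacle is justifying that only one of the near or far Laplace factors appears in each region. Strictly, $I$ contains interferers in both annuli, so $\mathcal{L}_I=\mathcal{L}_{I,\rm near}\mathcal{L}_{I,\rm far}$ evaluated at a common argument, as in Lemma~\ref{L1}. I would first try to obtain the stated form by adopting the lemma's decomposition with region-dependent arguments. If that fails, I would retain both factors in each integrand and note that the stated form is the approximation that keeps the dominant factor. The second caveat is the Alzer approximation, which is exact for $m=1$ and otherwise yields a tight approximation rather than an identity, so I would present the result as holding under that standard step.
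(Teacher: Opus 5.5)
Your route is the paper's route. Its proof also conditions on $d$ with the split at $d_{\rm break}$ and invokes Alzer's inequality. It then simply writes each conditional probability as $\sum_{n=1}^{m}(-1)^{n+1}\binom{m}{n}e^{-sN_0}\mathcal{L}_{I,\mathrm{near}}(s,H_s)$ (resp.\ with $s_1$ and $\mathcal{L}_{I,\mathrm{far}}$), using an equality sign and the negative $s$ as printed. So the three caveats you raise are genuine and are not resolved there: the sign of $s$, Alzer giving only a bound for $m>1$, and the missing cross-region factor.

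Two parts of your plan for the last caveat would not work, however. First, your ``first try'' cannot yield the printed form. Lemma~\ref{L1} is itself a product of two factors, and for a given $d$ only one of $s,s_1$ is the actual Laplace argument. The correct factorization, by independence of the Poisson process on the two disjoint annuli, is $\mathcal{L}_{I,\mathrm{near}}\mathcal{L}_{I,\mathrm{far}}$ at a common argument, exactly as you say. Second, the fallback claim that the printed form ``keeps the dominant factor'' is not justified. With the paper's parameters, the radial density of active interferers, $\pi p_a\mu_0 d^{\alpha}e^{-\beta d}$, peaks at $\alpha/\beta=1.5$\,km, beyond $d_{\rm break}=4h_{\rm t}h_{\rm r}/\lambda\approx1.08$\,km. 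So the first integral discards most of the interferers, while the second discards precisely the interferers closest to the OS.

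The defensible result is therefore your two-factor expression in each integrand. The theorem as printed follows only under an additional assumption, unstated in the paper, that a vessel is interfered with only by vessels in its own propagation regime.
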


\begin{proof}
		The coverage probability is defined as
$
	P_{\mathrm{cov}}^{UL}(T,H_s)
	=
	\mathbb{P}
	(\mathrm{SINR}_{S_m}\geq T).
$

	Conditioning on the serving distance gives
		\begin{align}
		P_{\mathrm{cov}}^{UL}
		=&
		\int_0^{d_{\rm break}}
		\mathbb{P}
		(\mathrm{SINR}^{\mathrm{2-ray}}_{S_m}\geq T|d)
		g(d)\mathrm{d}d
		\nonumber\\
		&
		+
		\int_{d_{\rm break}}^{d_{\rm LOS}}
		\mathbb{P}
		(\mathrm{SINR}^{\mathrm{3-ray}}_{S_m}\geq T|d)
		g(d)\mathrm{d}d .
	\end{align}

	For Nakagami-$m$ fading, applying the Alzer inequality gives
		\begin{align}
		\mathbb{P}
		(\mathrm{SINR}_{S_m}^{\mathrm{2-ray}}\!\!\geq \!T|d)
		\!=\!
		\sum_{n=1}^{m} \!
		(-1)^{n+1} \!
		\binom{m}{n}
		\mathcal{L}_{I,\mathrm{near}}(s,H_s)
		e^{-sN_0}.
	\end{align}

	Similarly,
	\begin{align}
		\mathbb{P}
		(\mathrm{SINR}_{S_m}^{\mathrm{3-ray}}\!\!\geq\! T|d)
		\!=\!
		\sum_{n=1}^{m} \!
		(-1)^{n+1} \!
		\binom{m}{n}
		\mathcal{L}_{I,\mathrm{far}}(s_1,H_s)
		e^{-s_1N_0}.
	\end{align}
	Substituting the above expressions completes the proof.
\end{proof}

\subsection{Sea-State-Induced Coverage Transition}
\label{S4.3}

The analytical expression above indicates that sea roughness affects coverage
through two competing mechanisms. For small $H_s$, reducing the coherent
reflection coefficient weakens destructive multipath cancellation and can
improve reliability. However, when the sea surface becomes sufficiently rough,
the attenuation of coherent reflected energy dominates and reduces the useful
received power.

To characterize this transition, we define the optimal significant wave height
for a given reliability requirement as

\begin{equation}
	H_s^\star
	=
	\arg\max_{H_s}
	P_{\mathrm{cov}}^{\mathrm{UL}}(T,H_s).
\end{equation}
The value $H_s^\star$ represents the operating point where the benefit of
multipath-null mitigation is balanced by the loss of coherent reflected power.
Therefore, the proposed framework provides a direct analytical connection
between environmental sea conditions and network-level reliability.

\begin{remark}\label{remark:sea_state_limits}
	The proposed model includes the conventional smooth-sea model as a special case. Specifically, when $H_s\rightarrow0$, the roughness attenuation factor approaches one, i.e., $\rho_{\mathrm{spec}}\rightarrow1$, and the effective reflection coefficient becomes $\rho_s\rightarrow-1$, which recovers the coherent two-ray/three-ray propagation model. As $H_s$ increases, the coherent reflected component is gradually attenuated and $\rho_s$ approaches zero under severe roughness conditions. In this case, the propagation behavior becomes increasingly dominated by the direct and non-coherent components. Therefore, the proposed model characterizes the variation of maritime propagation behavior with sea surface roughness while maintaining consistency with existing propagation models in the corresponding limiting cases.
\end{remark}
\section{Numerical Results}\label{S5}

In this section, we validate our derived analytical expressions against Monte Carlo simulations and empirical data. Unless otherwise stated, the simulations assume a carrier frequency of 162\,MHz with a 25\,kHz bandwidth. The vessel transmit power is set to 30\,dBm against a noise power of -174\,dBm. Small-scale fading follows a Nakagami-$m$ distribution with $m=3$. Key geometric parameters include OS antenna height of 50\,m, vessel antenna height of 10\,m, and an evaporation duct height of 30.5\,m. The spatial distribution of vessels is governed by a NHPPP with Gamma parameters $\alpha\! =\! 3$ and $\beta\! =\! 1/500$, a base density of $\mu_0\! =\! 1 \times 10^{-4}$~m$^{-2}$, and a medium access probability of $p_a=0.2$.
  
\begin{figure}[!t]
\vspace*{-1mm}
	\centering
	\subfigure[Two-ray region validation and residuals.]{\includegraphics[width=1\columnwidth]{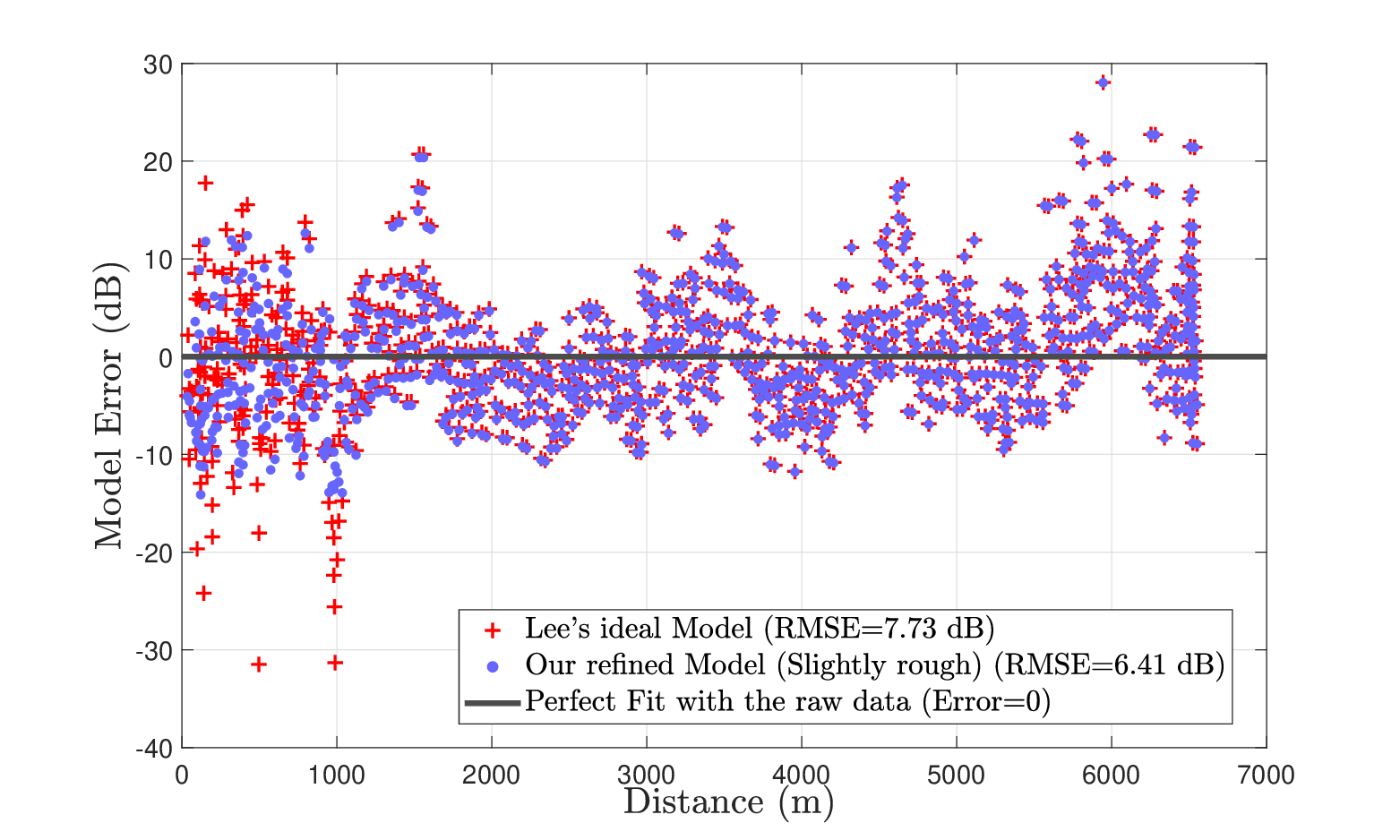}\label{fig:2ray_validation}}
	\subfigure[Three-ray validation and residuals.]{\includegraphics[width=1\columnwidth]{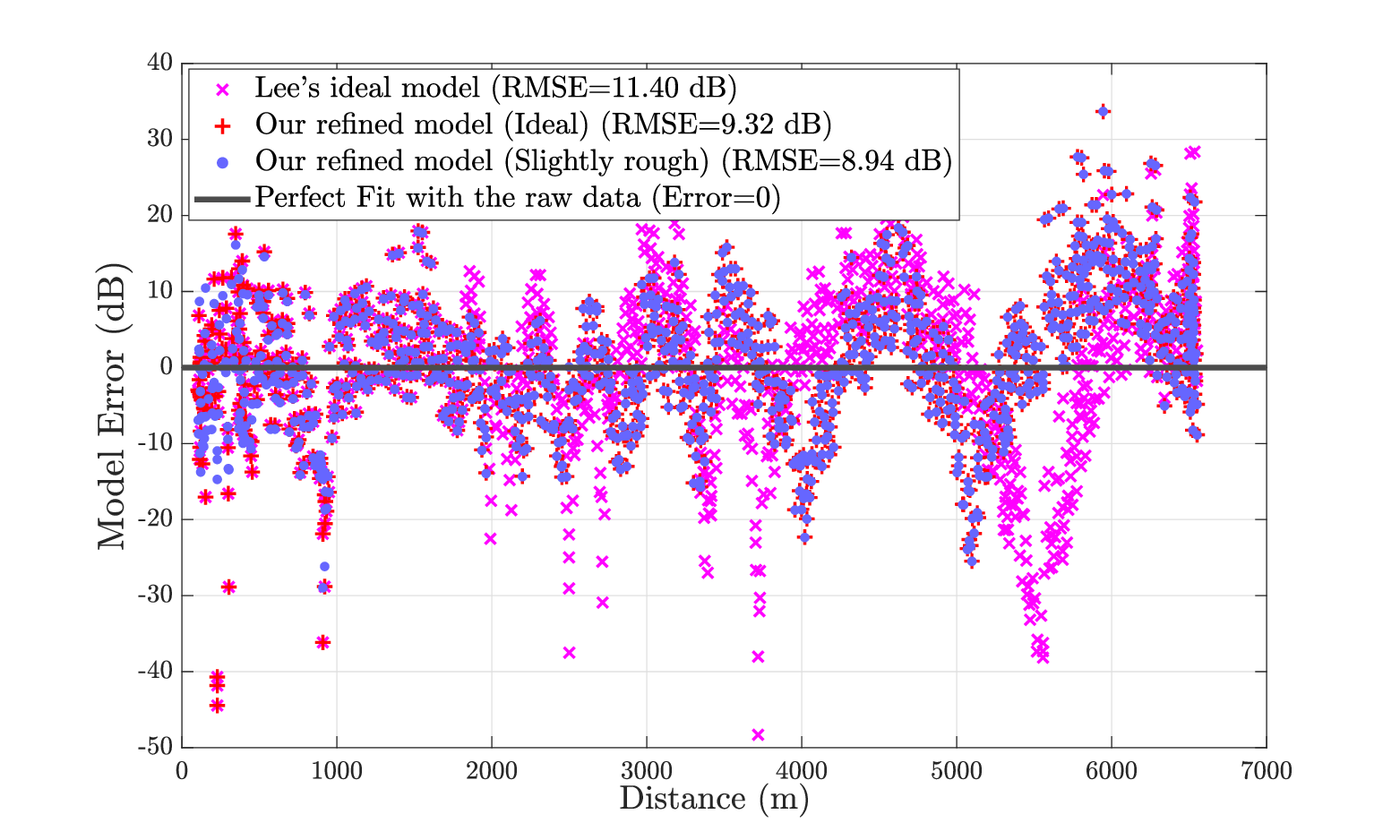}\label{fig:3ray_validation}}
		\vspace*{-5mm}
	\caption{Model validation against measured data from \cite{Lee2014}.}
	\label{fig:validation_combined} 
\vspace*{-4mm}
\end{figure}

\subsection{Model Validation against Empirical Data}\label{S5.1}

We first validate the proposed roughness-aware model against the real-world measurement data from \cite{Lee2014}. Fig.~\ref{fig:2ray_validation} focuses on the two-ray dominant region. The ideal model, predicated on perfect specular reflection, i.e., $\rho_s\! =\! -1$, exhibits deep, periodic nulls that significantly underestimate the measured signal strength within these fades. In contrast, our roughness-aware model, by accounting for the diffuse scattering caused by a sea state of $H_s\! =\! 1$\,m, generates shallower fades that more accurately envelop the scatter of the empirical data points. This qualitative improvement is quantitatively confirmed by a reduction in the root mean square error (RMSE) from 7.73\,dB to 6.41\,dB.

Fig.~\ref{fig:3ray_validation} extends this analysis to the full propagation range where the three-ray model is active. For distances beyond the breakpoint, i.e., $d \geq d_{\mathrm{break}}$, the ideal model exhibits significantly larger deviations and fluctuations compared to the empirical data. Our refined model, however, which correctly incorporates both the atmospheric ducting effect and surface roughness, demonstrates a much closer agreement with the data. This validates the necessity and accuracy of our comprehensive model, which reduces the overall RMSE from 11.40\,dB to 8.94\,dB.

\subsection{Coverage Performance Analysis}\label{S5.2}

\begin{figure}[!t]
	\vspace*{-1mm}
	\begin{center}
		\includegraphics[width=1\columnwidth]{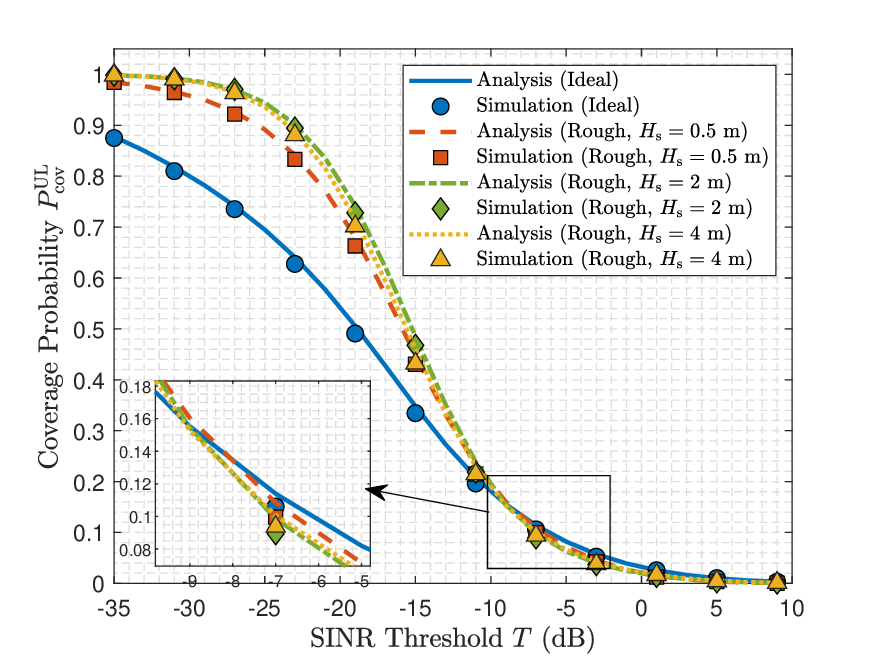}
	\end{center}
	\vspace*{-5mm}
	\caption{Coverage probability as a function of SINR threshold $T$ for different significant wave heights $H_{\textrm{s}}$.}
	\label{fig:cp_vs_T} 
	\vspace*{-4mm}
\end{figure}
\begin{figure}[!b]
	\vspace*{-3mm}
	\begin{center}
		\includegraphics[width=1\columnwidth]{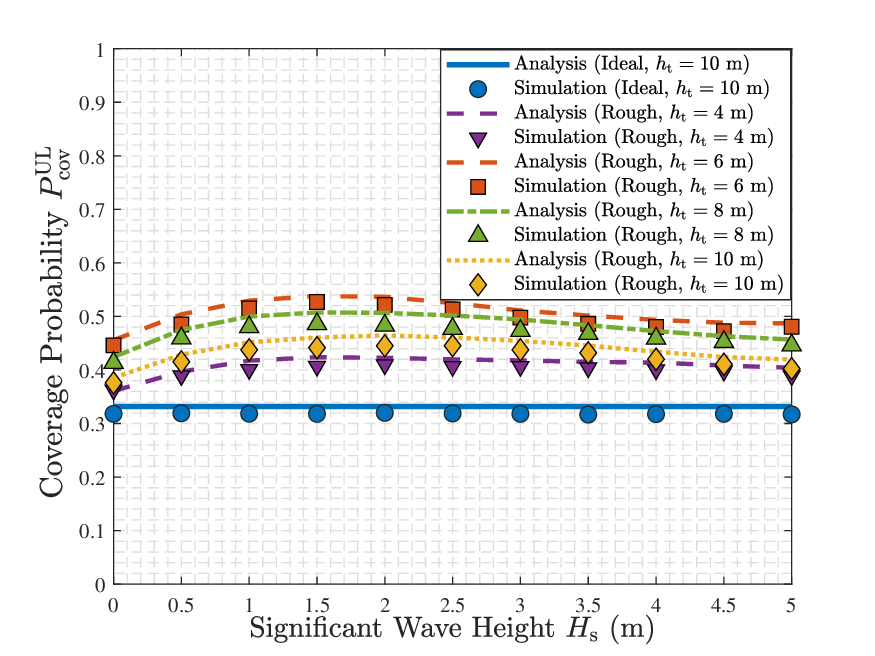}
	\end{center}
	\vspace*{-6mm}
	\caption{Coverage probability as a function of significant wave height $H_{\textrm{s}}$ for different transmitter antenna heights $h_{\textrm{t}}$.}
	\label{fig:core_finding} 
\end{figure}

Fig.~\ref{fig:cp_vs_T} plots the coverage probability against the SINR threshold $T$, revealing a critical performance crossover. This phenomenon underscores the dual role of sea roughness. For reliability-centric services that require coverage at low SINR thresholds, such as safety messages, moderate roughness is advantageous. It disrupts coherent destructive interference, thereby mitigating the probability of deep channel fades. Conversely, for rate-centric services needing coverage at high SINR thresholds, the same scattering effect is detrimental because it tempers the constructive interference peaks required to achieve high SINR values. Our model correctly captures the fact that an idealized smooth sea assumption would lead to pessimistic reliability estimates but optimistic peak-rate estimates.

This dual effect precipitates the non-monotonic behavior shown in Fig.~\ref{fig:core_finding}, which is the central finding of our work. For a fixed reliability target with $T=-15$\,dB, the coverage probability does not degrade monotonically with increasing roughness, but instead initially improves to a peak at a moderately rough sea state before declining. This phenomenon arises from a fundamental trade-off between two competing physical mechanisms. Initially, for low $H_s$, the dominant effect is an advantageous null mitigation, namely, the disruption of perfect coherent destructive interference mitigates deep, performance-killing fading nulls, thereby enhancing link robustness and increasing coverage probability. As the sea state becomes rougher, however, a detrimental power scattering effect begins to dominate. A larger fraction of the incident energy is scattered diffusely rather than coherently reflected, which reduces the overall useful reflected power and causes the average SINR to decrease. The apex of each curve thus signifies an peak coverage region under the adopted propagation model where the gains from null mitigation are maximally leveraged before the deleterious effects of specular power reduction become the dominant factor.

\section{Conclusion}\label{S6}

This paper developed a sea-state-aware stochastic geometry framework for maritime communication networks by incorporating surface roughness into physical propagation modeling. By deriving a roughness-aware reflection coefficient parameterized by significant wave height, the proposed model provides a more realistic characterization of maritime multipath propagation. The analytical results and measurement-based validation demonstrate that under the adopted roughness-aware propagation abstraction, sea roughness can have a non-monotonic impact on network performance. Moderate roughness may improve reliability-oriented coverage by alleviating destructive interference nulls, whereas stronger roughness attenuates coherent reflection and reduces high-SINR performance.These findings highlight the importance of considering realistic sea conditions in the performance analysis of maritime communication networks.  This work focuses on the physical propagation layer of non-stationary maritime networks. Together with our recent studies on fluid-spatiotemporal stochastic geometry and its extensions to network-level resource provisioning and transport-layer optimization \cite{Dong2026FSTSG,DongDigitalTides2026,Dong2026VDR}, this study represents a broader effort toward developing analytical frameworks for future non-stationary wireless systems.

\small
\bibliographystyle{IEEEtran}

\end{document}